\newcommand{\blind}{0}
\newtheorem{algorithm}{Algorithm}
\newtheorem{theorem}{Theorem}
\newtheorem{lemma}{Lemma}
\begin{document}

\def\spacingset#1{\renewcommand{\baselinestretch}%
{#1}\small\normalsize} \spacingset{1}


\if0\blind
{
  \title{\bf Data Nuggets: A Method for Reducing Big Data While Preserving Data Structure}

\author[1]{Traymon E. Beavers}
\author[2]{Ge Cheng}
\author[2]{Yajie Duan}
\author[2]{Javier Cabrera \thanks{Javier Cabrera is funded by a grant, in part, funded by the National Heart, Lung, and Blood Institute (NHLBI), National Institutes of Health (R01-HL150065).} }
\author[3]{Mariusz Lubomirski}
\author[1]{Dhammika Amaratunga}
\author[1]{Jeffrey E. Teigler}
\affil[1]{Janssen R\&D, Spring House, PA}
\affil[2]{Department of Statistics, Rutgers University, New Brunswick NJ}
\affil[3]{Amgen Pharmaceutical, Thousand Oaks, CA}

  \maketitle
} \fi

\if1\blind
{
  \bigskip
  \bigskip
  \bigskip
  \begin{center}
    {\LARGE\bf  Data Nuggets: A Method for Reducing Big Data While Preserving Data Structure}
\end{center}
  \medskip
} \fi

\bigskip
\begin{abstract}
Big data, with $N \times P$ dimension where N is extremely large, has created new challenges for data analysis, particularly in the realm of creating meaningful clusters of data. Clustering techniques, such as K-means or hierarchical clustering are popular methods for performing exploratory analysis on large datasets. Unfortunately, these methods are not always possible to apply to big data due to memory or time constraints generated by calculations of order $P*\frac{N(N-1)}{2}$. To circumvent this problem, typically the clustering technique is applied to a random sample drawn from the dataset; however, a weakness is that the structure of the dataset, particularly at the edges, is not necessarily maintained. We propose a new solution through the concept of ``data nuggets", which reduce a large dataset into a small collection of nuggets of data, each containing a center, weight, and scale parameter. The data nuggets are then input into algorithms that compute methods such as principal components analysis and clustering in a more computationally efficient manner. We show the consistency of the data nuggets based covariance estimator and apply the methodology of data nuggets to perform exploratory analysis of a flow cytometry dataset containing over one million observations using PCA  and K-means clustering for weighted observations. Supplementary materials for this article are available online.
\end{abstract}

\noindent%
{\it Keywords:}  big data, clustering, random sampling,  K-means for weighted observations 
\vfill

\newpage
\spacingset{1.5} 
\section{Introduction}
\label{sec:intro}

Datasets, with $N \times P$ dimension where the number of observations N is extremely large, are common in most areas of research and business including the pharmaceutical industry \citep{Sri2017}. Such data is computationally difficult to analyze so it is standard to use a smaller random sample, although the sample-to-sample variability may produce analyses with substantially different results. This motivates the use of representative samples that 
can preserve the structure of the original dataset.

Flury considered the idea of a representative sample of a distribution $F$ or a set of ``principal points'' which are a set of $k$ points that minimize the Euclidean distance from a random variable from $F$ to the closest point in the set \citep{Flu1990}. Tibshirani extended this idea to a set of ``principal lines" that approximate a dataset  \citep{Tib1992}, and Ye and Ho proposed the flowgrid to represent the dataset\citep{Ye}. 

Ghosh, Cabrera, et al. analyzed big data in the form of comorbidity binary variables for millions of patients \citep{Gho2016}. They grouped the data to reduce the number of observations and applied a weighted form of K-means clustering. Similarly, Har-Peled and Mazumdar discussed the idea 'coreset' that uses a small dataset for clustering to get an  (1+$\epsilon$)-approximation result \citep{har2004coresets}.

Independently, Mak and Joseph proposed the concept of ``support points'' to represent big data with a small subset of observations \citep{Mak2018}. These support points are the same as principal points but instead, they minimize the energy distance \citep{Sze2013}.

Immunological research has been transformed by the continued generation and improvement of flow cytometry methods \citep{Jah2012}. These methods use fluorochrome-conjugated antibodies to differentiate cell populations based on the surface and internal expression of delineating proteins. Datasets generated by standard flow cytometry experiments routinely include the interrogation of millions of cells with greater than 12 different parameters, often more. Classical flow cytometry analysis requires bivariate graphing for visualization, but it becomes inefficient and can lead to overlooking important cellular phenomena as more parameters are measured. Therefore, methods for observation reduction and cluster identification therefore become critical for managing these large datasets. 

The most typical method would be to apply a clustering technique to the dataset, such as K-means clustering or hierarchical clustering; however, a dataset as large as those found in flow cytometry experiments would require far too many resources, such as computational memory and time. As it will be shown later, earlier proposals such as random samples or support points may not be able to capture the structure around the edges of the dataset. 

We propose a different method which instead reduces the millions of data points into a smaller collection of ``data nuggets". All the individual data points coalesce into many data nuggets, while still retaining the structure of the data. A weighted form of K-means clustering can then be used to configure the data nuggets into various clusters.

Section 2 introduces notation and provides a brief overview of the issues that may arise when attempting to apply common clustering methods to large datasets. Section 3 describes the algorithm for creating data nuggets and the algorithm for creating clusters using K-means clustering for weighted observations. This section also provides simulation results comparing the accuracy of K-means clustering to K-means clustering for data nuggets with weights generated from a dataset with binary variables, compares data nuggets to the support points given by Mak and Joseph, and gives a demonstration of how well data nuggets perform on datasets when the $P$ is large using a simulated dataset. Section 4 applies the algorithm to a flow cytometry dataset containing over one million B cells. Section 5 describes two R packages created to use the method and future work that can be done concerning this method.

\section{Limitations for Large Datasets}

We now introduce notation by generalizing our motivating example of a flow cytometry dataset. Suppose an experiment with $N$ observations (B cells), where $N$ is in the millions, is conducted to measure the level of expression of $P$ different proteins. Let $\mathbf{X}$ be the matrix containing the information pertaining to the levels of expression of each protein for each B cell so that:

\[
\mathbf{X} 
=
\begin{bmatrix}
{\mathbf{\underaccent{\tilde}x}}_{1} \\
{\mathbf{\underaccent{\tilde}x}}_{2} \\
\vdots \\
{\mathbf{\underaccent{\tilde}x}}_{N-1} \\
{\mathbf{\underaccent{\tilde}x}}_{N}
\end{bmatrix} 
=
\begin{bmatrix}
x_{11} & x_{22} & \hdots & x_{1(P-1)} & x_{1P}   \\
x_{21} & x_{22} & \hdots & x_{2(P-1)} & x_{2P}   \\
\vdots & \vdots & \ddots & \vdots & \vdots \\
x_{(N-1)1} & x_{(N-1)2} & \hdots & x_{(N-1)(P-1)} & x_{(N-1)P}   \\
x_{N1} & x_{N2} & \hdots & x_{N(P-1)} & x_{NP}
\end{bmatrix}
\]
 
Where ${\mathbf{\underaccent{\tilde}x}}_n$ is the row vector containing the protein expression levels for the $n^{th}$ B cell and $x_{np}$ is the level of expression of protein $p$ for the $n^{th}$ B cell, for $n=1,2, ..., N$ and $p =1,2, ..., P$.

The goal of the experiment is to find out if there are any meaningful groups of cells based on the level of expression of the $P$ different proteins. We can search for these groups of cells by placing cells into different clusters and then determining if any of the proteins have a particularly weak or strong level of expression in any of the clusters. The memory usage and computation needed is of the order of $P*\frac{N(N-1)}{2}$ for typical clustering techniques, like K-means clustering, hierarchical clustering \citep{Cab2002} and PAM\citep{kaufman1990partitioning}.

These clustering methods have limitations for very large datasets. For K-means clustering, the final cluster assignments heavily depend on the initial choice of cluster centers \citep{Ayr2006}. A clear remedy for this is to choose multiple initial cluster centers, conduct K-means clustering, and choose the set of clusters that minimizes the total within the cluster sum of squares. For datasets with a large number of observations, many initial centers may need to be attempted. For the LLoyd, Forgy, and MacQueen algorithms \citep{Llo1982, For1965, Mac1967} the time cost is high in R \citep{R2019}, which may lead the user to sacrifice the number of initial cluster centers they choose to evaluate.

On the other hand, the Hartigan \& Wong algorithm \citep{Har1979} may fail to finish running for large datasets because the memory cost necessary to store the closest cluster assignment and the second closest cluster assignment for each observation is too high. This is also the case for hierarchical clustering methods, which may not even have a chance to begin because the distance matrix cannot be formed for datasets that are too large. 


A common solution to this problem has been to retrieve a random sample of the data and use a clustering algorithm on this reduced dataset. The intuition is that if the sample is sufficiently large, the data structure of the sampled data should match the data structure of the entire data. Unfortunately, this intuition does not always hold. Further, since a distance matrix is needed for hierarchical clustering, the random sample may need to be reduced to a very small amount of observations when compared to the entire dataset. 

Another possible solution is to reduce the large dataset to a set of only a few data points to represent the dataset as a whole, in the form of principal points, principal lines, or support points. To avoid the memory and time constraints of using full datasets or large random samples, the pitfalls associated with massive data reduction, and the lack of focus on the edges of the data structure which may occur when using support points, we propose using data nuggets.

\section{Data Nuggets}

For a dataset with dimension $N\times P$, the process of creating data nuggets is inspired by the idea of partitioning N observations into $M$ equally sized nuggets. Each nugget would represent a data nugget with P dimensional variables, where the centers of these nuggets would form the data nugget centers, the number of observations from the dataset that exist in these nuggets would form the data nugget weights, and the trace of the covariance matrices for the observations within each nugget divided by $P$ would form the data nugget scales. 

When both $P$ and $M$ are low this is a relatively simple feat, but when either $P$ or $M$ is large the amount of computational resources required becomes unrealistic. A more feasible option would be to use observations already within the dataset as the initial centers of data nuggets. This can be done by choosing observations in the dataset which are as equally spaced apart as possible. Then all the remaining observations are assigned to the data nugget with the nearest center according to a distance metric. This method will ensure that each observation will only be assigned to a single data nugget and each data nugget will contain at least one observation. Each data nugget is then re-centered by either finding the mean of the observations it contains or choosing a random observation to be the center. 

After the data nuggets are created, a refinement process is applied to them in order to split data nuggets with a large scale parameter and elongated shape. The schematics of the proposed algorithms for creating and refining data nuggets are presented in Figure \ref{fig:flowchart}.

\begin{figure}[h!]

\centerline{\includegraphics[width=400pt]{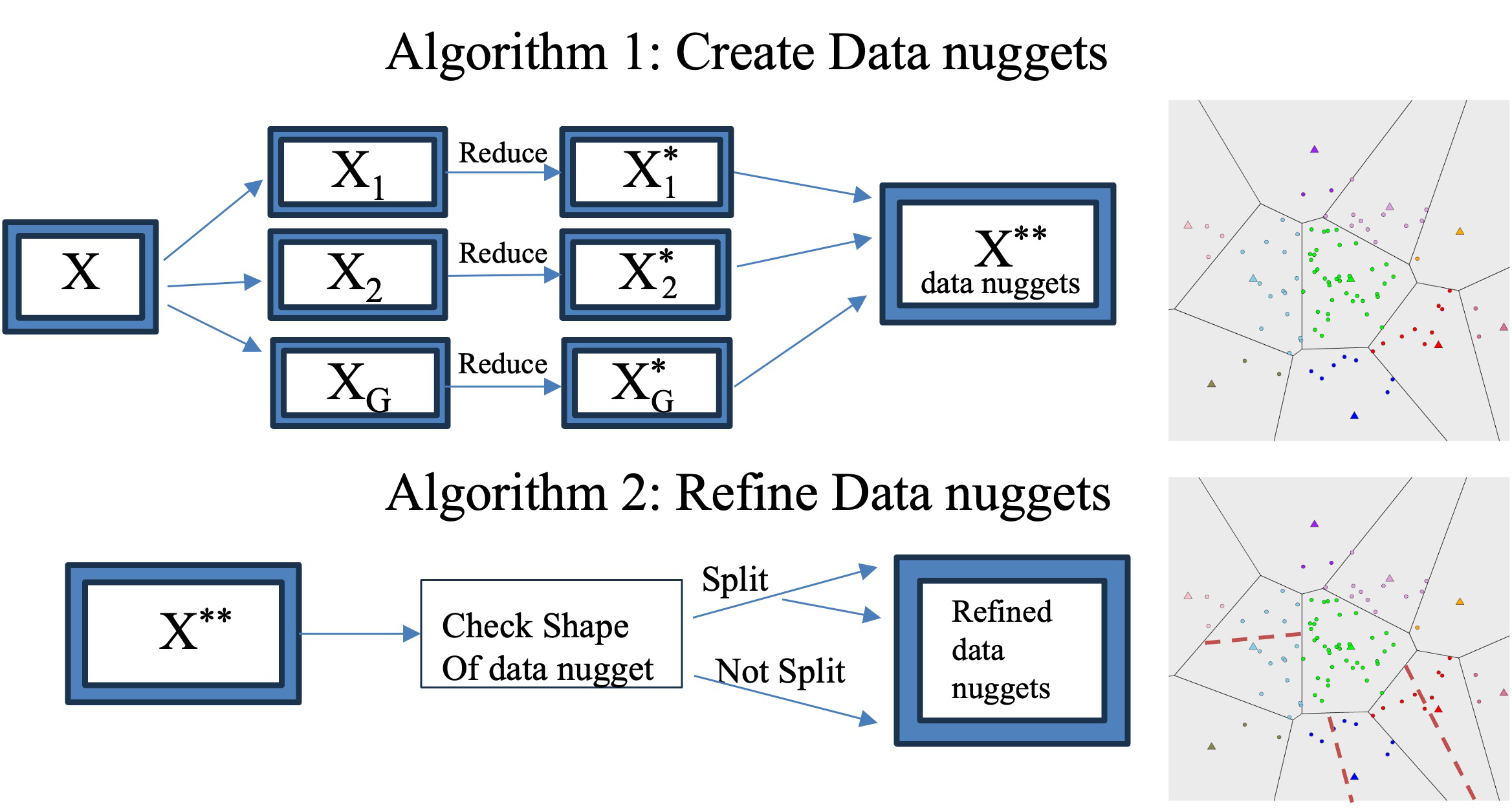}}

\caption{Algorithms to create and refine data nuggets}

\label{fig:flowchart}

\end{figure} 

Firstly, we detailed the algorithm to create data nuggets below.

\begin{algorithm} 

\label{alg1}

Create $M$ data nuggets given: $\mathbf{X}$, an $N\times P$ data matrix; a choice for how a data nugget's center will be chosen (mean or random); X is divided in G randomly selected subsets of size R. At each iteration a proportion C of observations will be deleted from each subset. $M_{init}$ is the number of observations left in total in the first part; $M$, the final number of data nuggets to create; and $D$, a distance metric.

\end{algorithm}

\begin{enumerate}


\item Randomly split $\mathbf{X}$ into the set of $R\times P$ submatrices $\{ \mathbf{X}_g | g = 1,2, ..., G \}$, where $R = \lceil \frac{N}{G} \rceil$

\item For $g = 1,2, ..., G$:\\
If the number of observations contained by $\mathbf{X}_g$ is greater than $\lceil \frac{M_{init}}{G} \rceil$, conduct steps 2.1 through 2.4 to "reduce"  $\mathbf{X}_g$ 
 to a data matrix $\mathbf{X}^{*}_g$ with numbers of observations less or equal  to $\lceil \frac{M_{init}}{G} \rceil$. Otherwise, let $\mathbf{X}^{*}_g = \mathbf{X}_g$.
	
	\begin{enumerate}[label=\theenumi.\arabic*]

\item Create a distance matrix for the observations contained in $\mathbf{X}_g$ using the given distance metric $D$.
	
\item Let $N_g$ be the current number of observations contained in $\mathbf{X}_g$. Find the $ \lfloor C N_g \rfloor$ smallest non-diagonal distances in the matrix. Let $\mathcal{A}$ and $\mathcal{B}$ be the set of $ \lfloor C N_g \rfloor$ observations from the rows and columns of the distance matrix, respectively, that have these distances between them.
	
\item Remove  $\mathcal{A}$ or $ \mathcal{B}$ from $\mathbf{X}_g$.
			
\item Repeat steps 2.1 through 2.3 until $N_g  = \lceil \frac{M_{init}}{G} \rceil$ and let $\mathbf{X}^{*}_g = \mathbf{X}_g$.

	\end{enumerate}

\item Let $(\mathbf{X}^{*})^{\prime} = [ (\mathbf{X}^{*}_1)^{\prime}, (\mathbf{X}^{*}_2)^{\prime}, ..., (\mathbf{X}^{*}_G)^{\prime} ]$ so that $\mathbf{X}^{*}$ is the data matrix containing the (roughly) $M_{init}$ centers of the initial set of data nuggets. 

\item Conduct steps 2.1 through 2.4 (replacing $\mathbf{X}_g$, $\lceil \frac{M_{init}}{G} \rceil$, and $\mathbf{X}^{*}_g$ with $\mathbf{X}^{*}$, $M$, and $\mathbf{X}^{**}$, respectively) to produce data matrix $\mathbf{X}^{**}$. The $M$ observations that remain in $\mathbf{X}^{**}$ are the initial centers of the final set of data nuggets. Let data nugget $j$ have center $\mathbf{\underaccent{\tilde}c}_j$ for $j = 1, 2, ..., M$.

\item For each observation $\mathbf{\underaccent{\tilde}x}_i$ in data matrix $\mathbf{X}$, assign $\mathbf{\underaccent{\tilde}x}_i$ to data nugget $j$ so that $D\left(\mathbf{\underaccent{\tilde}x}_i,\mathbf{\underaccent{\tilde}c}_j\right)$ is minimized over $j$. Let $N_j$ be the number of observations assigned to data nugget $j$, and let $w_j = N_j$ be the weight of data nugget $j$ for $j = 1,2, ..., M$.

\item Re-center all of the data nuggets by choosing $\mathbf{\underaccent{\tilde}c}_j$ to be either the mean of all the observations assigned to data nugget $j$ or a random observation assigned to data nugget $j$, depending on the user's choice. The later assignment may be sensitive and should be applied with caution.
	
\item Finally, let $s_j = \frac{tr\left(Cov\left(\mathbf{X}_j\right)\right)}{P}$ be the scale of data nugget $j$ when $N_j > 1$, where $\mathbf{X}_j$ is the submatrix of observations from $\mathbf{X}$ which belong to data nugget $j$. When $N_j = 1, s_j=0$.

\end{enumerate}

Here, the computational cost is dominated by the $NM$ term(see Appendix). Notice that some algorithms like K-means that are standard in statistics are of order $N^2$, therefore when N is very large, they cannot be applied. By absorbing the cost of order NxM that is required for data nuggets, the computation cost for those methods becomes of order $M^2$, which is acceptable.

The choice of $M_{init}$ is related to the hardware capabilities. One issue is the computation of distances of order $M_{init}^2$ and the memory usage, both can slow down the computation of data nuggets. In our Macbook Pro M1 with 16Gb and 4 cores, we feel comfortable using $M_{init}$ = 10000, but in systems with higher capabilities, it can be increased.

The choice of M is also similar we start with M=2000 and after the refined algorithm, we end up with around M=3000 which seems to be adequate for most datasets that we worked with. We use C=5\% as default and this is a good compromise between speed and missing data nugget centers. The choice of C will become 1 when this portion of the algorithm is implemented in C++.
In addition, in the first step, the randomness when splitting X could be reduced if we choose large R (with 5000 as default), but it will increase the computation cost in the following steps. This trade-off should be taken into consideration carefully for users.  One rational approach to deciding the number of nuggets is in the order of  $\sqrt{NP}$. For example, if $N=1,000,000$ and $P=9$, the number of data nuggets $M=3000$ is reasonable. In practice, as many data nuggets as that machine can handle are preferred. 

The first step of Algorithm 1 is to partition the data set into G subsets of observations. Different runs of algorithms may produce different data nugget sets. However, the results of applying the clustering methods to different data nugget sets will be similar.

An issue that may exist here is that the data nuggets produced may be quite large around the edges of the data because the edges region would be very sparse and the nuggets will be long and thin ellipsoids pointing toward the edges, solutions to this could be using random allocation of the center in the $6^{th}$ step or create more nuggets. In the following example, we use random allocation first and also introduce a modified algorithm later.

\begin{figure}[h!]

\centerline{\includegraphics[width=400pt]{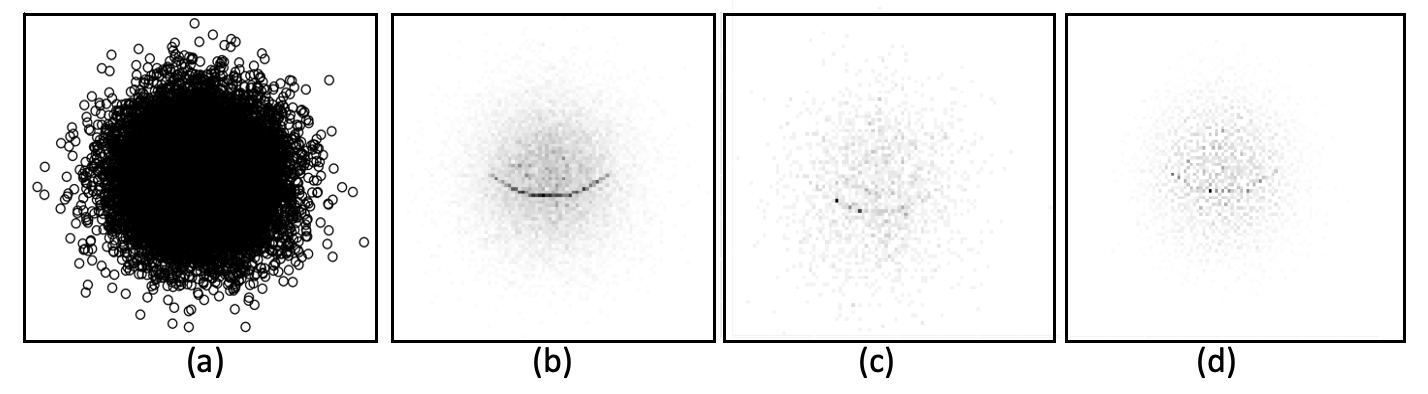}}

\caption{Comparing Density Plots for Random Sample and Data Nuggets}

\label{fig:DNExample}

\end{figure} 

Figure \ref{fig:DNExample} uses density plots to compare the amount of data structure maintained after reducing a bivariate dataset of 15,601 observations to a simple random sample of 2,000 observations versus reducing that same bivariate dataset to 2,000 data nuggets using \textbf{Algorithm \ref{alg1}}. Figure \ref{fig:DNExample} and all figures that follow were created using R. 

The data nuggets were created by reducing the entire dataset to 15,601 initial data nugget centers ($R = 5,000$, $C = 0.10$, and $M_{init}$ = 15,601) which were then reduced to 2,000 data nuggets ($M$ = 2,000) with the mean as a center and the Euclidean distance as the distance metric. The dataset is a mixture of data derived by sampling 15,000 observations from two independent standard normal distributions and combining these observations with 601 observations which create a ``smile" that is hidden inside the random noise.

The plot (a) shows a scatter plot of the entire dataset of 15,601 observations, the plot (b) shows the density plot of the entire dataset of 15,601 observations, the plot (c) shows the density plot of the random sample of 2,000 observations, and the plot (d) shows the density plot for the 2,000 data nuggets.

The density plots for the entire dataset and the random sample are created by dividing the area of the original scatterplot into a $100\times100$ grid and counting the number of observations in the dataset that fall inside each nugget of the grid. The nuggets are then color 
 ed on a gradient according to how many observations are in the nugget. Nuggets with a low number of observations produce light intensities like white or light gray while nuggets with a higher number of observations produce dark intensities such as dark gray or black.

The density plot for the data nuggets is produced in a similar manner but with a slight modification. Once again, the area of the original scatterplot is divided into a $100\times100$ grid. However, instead of using the number of data nuggets that fall within each nugget, the sum of the weights of the data nuggets that fall within the nugget is used. Then the nuggets are colored accordingly.

As shown in Figure \ref{fig:DNExample}, the density plot (b) for the entire dataset clearly shows a thin smile inside the ball of random noise. The density plot (c) for the random sample faintly produces the smile, but most of the smile is colored gray (unlike the smile in the density plot for the entire dataset which is colored black)  and there is a large amount of random noise surrounding the smile. 

The density plot (d) for the data nuggets shows a much more distinct smile. Most of this smile is colored dark gray or black and the amount of random noise is much more concentrated around the smile, matching what is seen in the density plot for the entire dataset.

Data nuggets can also be refined by splitting data nuggets with scale parameters too large and/or ``shapes" too nonspherical. The purpose of this method is to provide each data nugget with a more common level of within-data-nugget variability. The largest eigenvalue of a data nugget (i.e. the largest eigenvalue yielded from the covariance matrix of the dataset composed of the observations assigned to that data nugget) is used as a proxy for the within-data-nugget variability. Specifically, if the largest eigenvalue of a data nugget is too large, then we believe that the within-data-nugget variability is too large.

Data nuggets are refined as detailed in the algorithm below.

\begin{algorithm}
\label{alg2}

%

Refine $M$ data nuggets given: $\mathbf{X}$, an $N\times P$ data matrix; $\mathbf{X^{**}}$, the centers of $M$ data nuggets formed from $\mathbf{X}$ using \textbf{Algorithm \ref{alg1}}; $\nu$, a percentile for splitting data nuggets according to their largest eigenvalue; and $N_{min}$, the minimum number of observations that a data nugget must contain as a result of this algorithm. 

\end{algorithm}

\begin{enumerate}

\item For every data nugget:
	
	\begin{enumerate}[label=\theenumi.\arabic*]

	\item Let $\mathbf{X}_j$ be the submatrix of observations from $\mathbf{X}$ which belong to data nugget $j$.
	
	\item When $P \geq 2$ let $\zeta_j$ be the largest eigenvalue of $Cov\left(\mathbf{X}_j\right)$. When $P=1$ let $\zeta_j$ be the scale of data nugget $j$.

	\end{enumerate}

\item Obtain $\eta$, the quantile of the non-zero $\zeta_j$'s corresponding to the $\nu^{th}$ percentile. 

\item Create $\mathcal{A}$, a list of all data nuggets with scales larger than $\eta$.
	
\item For every data nugget $j\in \mathcal{A}$:
	
	\begin{enumerate}[label=\theenumi.\arabic*]

	\item If data nugget $j$ contains at least greater than $2N_{min}$ observations, split data nugget $j$ into two new data nuggets using K-means clustering.
	
	\item If either of the two new data nuggets created in step 4.1 contains less than $N_{min}$ observations, delete these two data nuggets and retain data nuggets $j$. Otherwise, delete data nugget $j$ and remove data nugget $j$ from $\mathcal{A}$.

	\end{enumerate}
	
\item Repeat steps 2 and 3 until $\mathcal{A}$ is empty or step 4 is completed without any data nuggets being removed from $\mathcal{A}$.

\end{enumerate}

Note that it is possible that data nuggets may be split an undesirable amount of times if the algorithm is left unchecked. As such, a limit can be placed on the number of times steps 2, 3, and/or 4 are executed before the algorithm ends.

Using density plots, Figure \ref{fig:RefineExample} compares the original 2,000 data nuggets refined to 2,504 data nuggets using \textbf{Algorithm \ref{alg2}} with $\nu = .5$ and $N_{min}=2$.

The first row of plots is the scatter plot of the original 2,000 data nuggets (a) besides its corresponding density plot (b). The second row of plots is the scatter plot of the refined 2,504 data nuggets (c) beside its corresponding density plot (d). The density plot for the 2,504 data nuggets has a much more consistent smile with fewer gaps and the ball of random noise is slightly more concentrated around the smile.

After the data nuggets are created, modified versions of common statistical techniques can be applied to them. In this paper, we explore K-means clustering for weighted observations and principal component analysis (PCA) for weighted observations. In both of these methods, we ignore the scale parameter of the data nuggets and instead focus on using the weight parameter. This is done because the within-data-nugget variability of the data nuggets is minuscule for a large enough number of data nuggets, provided the mean is chosen to be the data nuggets' centers. This notion is formalized in \textbf{Lemma \ref{lem1}} and \textbf{Theorem \ref{thm1}}.

\begin{figure}[h!]

\centerline{\includegraphics[width=400pt]{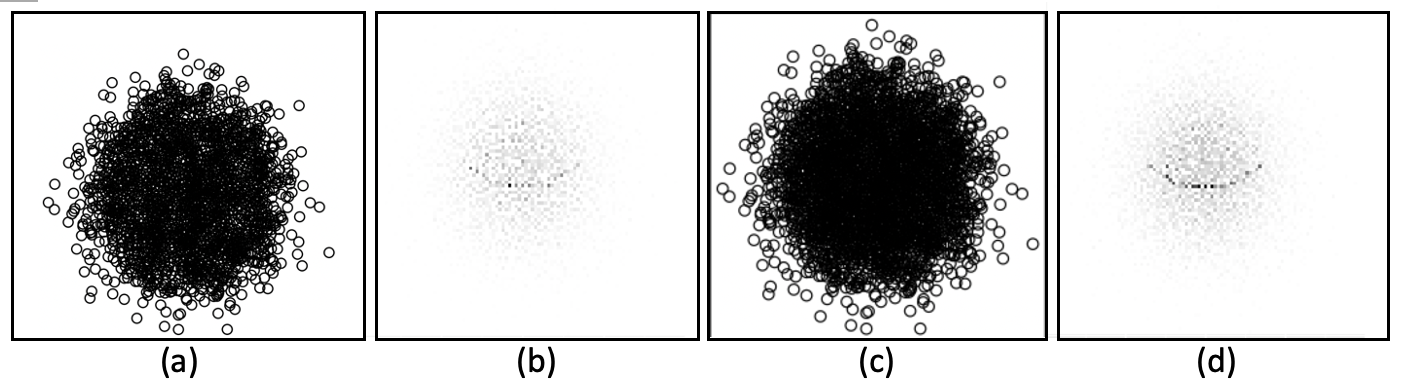}}
\caption{Comparing Density Plots for Original and Refined Data Nuggets}

\label{fig:RefineExample}

\end{figure}
Let $\mathbf{X}$ be an $N\times P$ data matrix whose rows are random vectors from some population with mean $\mathbf{\underaccent{\tilde}0}$ and covariance matrix $\mathbf{\Sigma}$. Without loss of generality, let $\mathbf{X}$ be centered at $\mathbf{\underaccent{\tilde}0}$ and let the sample covariance matrix of $\mathbf{X}$ be $\mathbf{S} = {(N-1)}^{-1}\mathbf{X}\mathbf{X}^{\prime}$. Let $\{ \mathbf{\underaccent{\tilde}c}_1, ..., \mathbf{\underaccent{\tilde}c}_M \}$, $\{w_1, ..., w_M \}$, and $\{s_1, ..., s_M \}$ be the centers, weights, and scales, respectively, of $M$ data nuggets created with \textbf{Algorithm \ref{alg1}} with data nugget centers chosen to be mean and using Euclidean distance as the choice of distance metric and refined with \textbf{Algorithm \ref{alg2}} to form a representative dataset of $\mathbf{X}$. 

Also, let these data nuggets be ordered so that $s_1 \leq s_2 \leq \cdots \leq s_M$. Further, let $\mathbf{X}_{j}$ be the submatrix of $w_j$ observations from $\mathbf{X}$ contained in data nugget $j$ for $j = 1, ..., M$. Finally, let $\mathbf{S}_{DN}= {(N-1)}^{-1}\sum_{j=1}^{M}{w_j\mathbf{c}_j\mathbf{c}_j^\prime}$, $\mathbf{\Delta}_{j} = Cov(\mathbf{X}_{j})$ for $j = 1, ..., M$, and $\mathbf{\Delta} = {(N-1)}^{-1}\sum_{j=1}^{M}(w_j - 1)\mathbf{\Delta}_j$. Note that $\mathbf{S}_{DN}$ and $\mathbf{\Delta}$ are estimates of the variability between the data nugget centers and the sum of the within-data-nugget variability of the $M$ data nuggets, respectively. Then, the following can be shown:

\begin{lemma}
\label{lem1}

If all $M$ data nuggets locally have approximately a continuous uniform distribution inside a region that is approximately a $P$-dimensional sphere and $s_1 \approx s_2 \approx \cdots \approx s_M \approx s$ for some $s > 0$ (consistent with Algorithm 2 - a more common level of within-data-nugget variability), then there is a method for splitting these $M$ data nuggets so that when $N$ increases to $2^P N$, $M$ increases to $2^P M$ and $\lim\limits_{N \rightarrow \infty} s_{M^*} = 0$ where $M^*$ is the number of data nuggets created and refined to form a representative dataset for $N$ observations.

\end{lemma}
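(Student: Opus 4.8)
\emph{Sketch of proof.} The plan is to exploit a one-to-one correspondence between a data nugget's scale and the radius of the ball it approximately fills, and then to observe that one ``doubling step'' ($N\mapsto 2^{P}N$, $M\mapsto 2^{P}M$) is realized by a splitting that shrinks that radius by a fixed factor.

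\emph{Step 1 (scale versus radius).} If the $w_j$ rows of $\mathbf{X}_j$ are approximately uniform on a $P$-dimensional ball of radius $r_j$, a direct computation (the radial integral $\int_{0}^{r_j}t^{P+1}\,dt$) gives $Cov(\mathbf{X}_j)\approx \frac{r_j^{2}}{P+2}\,\mathbf{I}_P$, hence $s_j=\frac{tr(Cov(\mathbf{X}_j))}{P}\approx \frac{r_j^{2}}{P+2}$. So the common-scale hypothesis $s_1\approx\cdots\approx s_M\approx s$ is equivalent to a common radius $r$ with $r=\sqrt{(P+2)\,s}$, and under Algorithm~2 this $r$ is also (approximately) the largest-eigenvalue proxy up to the constant $P+2$.

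\emph{Step 2 (volume bookkeeping).} Let $\rho_N$ be the density of the $N$ sample points per unit volume; since the population is fixed and only the sample size grows, $\rho_N\propto N$. Data nugget $j$ contains $w_j$ points occupying its ball, so $w_j\approx \rho_N\,\kappa_P r_j^{P}$, where $\kappa_P$ is the volume of the unit $P$-ball. Under the equal-weight, equal-scale hypotheses, $w_j\approx N/M=:\bar{w}$ and $r_j\approx r$, whence $r\approx\bigl(\bar{w}/(\rho_N\kappa_P)\bigr)^{1/P}\propto\bigl((N/M)/N\bigr)^{1/P}=M^{-1/P}$ and therefore
\[
 s\;\approx\;\frac{r^{2}}{P+2}\;\propto\;M^{-2/P}.
\]

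\emph{Step 3 (the splitting method and the limit).} When $N$ is replaced by $2^{P}N$ and $M$ by $2^{P}M$, note $\bar{w}=N/M$ is unchanged while $\rho$ is multiplied by $2^{P}$. Take the method that cuts each nugget by the $P$ mutually orthogonal mid-hyperplanes through its center (in the nugget's own principal-axis frame), producing $2^{P}$ pieces of equal volume, and re-centers each piece at its mean as in Algorithm~1's re-centering step. Each new piece then holds $\approx\bar{w}$ points in a region of volume $\approx\kappa_P r^{P}/2^{P}$, i.e.\ of effective radius $r/2$, so its scale is $\approx s/4$. (Equivalently, without insisting the pieces be exact balls: cutting a region by a hyperplane and re-centering at the new mean never increases the within-piece trace-variance --- this is precisely the K-means descent property invoked in Algorithm~2 --- and $P$ orthogonal cuts of an approximately spherical region each contribute a definite decrease, giving a contraction factor $c_P<1$ depending only on $P$.) Iterating $k$ times sends $N\mapsto 2^{kP}N$, $M\mapsto M^{*}=2^{kP}M$, and the common scale to $\approx s/4^{k}$ (or $\le c_P^{\,k}s$); since the nuggets stay ordered $s_1\le\cdots\le s_{M^{*}}$ with all scales near-equal, $s_{M^{*}}\to 0$ as $k\to\infty$, i.e.\ as $N\to\infty$.

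\emph{Main obstacle.} Every geometric statement here is only approximate: the nuggets are only approximately uniform on approximate balls, and after one split the pieces are spherical orthants, not balls, so the hypothesis of the lemma is not literally reproduced at the next scale. The crux of a rigorous argument is to show the approximation errors do not accumulate --- that the per-step contraction factor stays uniformly below $1$. This should follow because the contraction constant ($1/4$, or $c_P$) is a fixed dimensional quantity, and because refining a partition of a fixed smooth density makes each cell look increasingly uniform, so the ``approximately uniform on a ball'' picture is self-consistent at every scale up to lower-order terms; turning this into a quantitative bound (say, a Lipschitz bound on the density together with a shape-regularity bound on the cells produced by the split) is where the real work lies.
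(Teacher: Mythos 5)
Your proposal follows essentially the same route as the paper's proof: split each nugget into $2^{P}$ pieces of half the radius at each doubling step $N\mapsto 2^{P}N$, so the common scale contracts by a fixed factor per step and $s_{M^{*}}\to 0$ geometrically. You are in fact more explicit than the paper --- which simply posits a linear relation $s_j=br_j$ and asserts without construction that the split yields spheres of radius $r^{\prime}/2$ --- since you compute the uniform-ball covariance $\frac{r^{2}}{P+2}\mathbf{I}_P$ and give a concrete mid-hyperplane splitting; the obstacle you honestly flag (the pieces are orthants, not balls, so the hypothesis is only approximately reproduced at the next scale) is equally present, and equally unaddressed, in the paper's own argument.
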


\begin{proof}

Since $M$ data nuggets locally have approximately a continuous uniform distribution inside a $P$-dimensional sphere, let the radius of this sphere and covariance matrix for data nugget $j$ be $r_j$ and  $\mathbf{\Delta}_{j} \approx s_j^2 \mathbb{I}_P$ respectively, where $j = 1,2, ..., M$ and $\mathbb{I}_P$ is the $P \times P$ identity matrix, we have $s_j = br_j$ for some $0 < b < 1$ for $j = 1,2, ..., M$; Then, $r_1 \approx r_2 \approx \cdots \approx r_M \approx r^{\prime}$ as $s_1 \approx s_2 \approx \cdots \approx s_M \approx s$. 

Let $\mathbf{X}^{(i)}$, $N^{(i)}$, $M^{(i)}$, $s^{(i)}$ be the new data matrix at $i^{th}$ splitting step, with initial value $\mathbf{X}$, $N$, $M$, and $s$ respectively. 

First, suppose $(2^P - 1) N$ more observations are collected to X denoted by $\mathbf{X}^{(1)}$, an $N^{(1)} \times P$ data matrix, where $N^{(1)} = 2^P N$. To accommodate these new observations, $M$ data nuggets are split into $M^{(1)} = 2^P M$ data nuggets so that each new data nugget has approximately a continuous uniform distribution inside a $P$-dimensional sphere with radius $r_j \approx \frac{r^{\prime}}{2}$ for $j = 1,2, ..., M^{(1)}$. As such, $s_j \approx s^{(1)} < 2^{-1} r^{\prime}$ for $j = 1,2, ... , M^{(1)}$. 

Then, if $M^{(i-1)}$ data nuggets are split into $M^{(i)}$ data nuggets in the manner described above every time $(2^P - 1) N^{(i-1)}$ observations are added to data matrix $\mathbf{X}^{(i-1)}$ for $i = 2, 3, 4, ...$, then the $i^{th}$ time the data nuggets are split, $s_{M^{(i)}} \approx s^{(i)} < 2^{-i} r^{\prime}$. Finally, consider that $\lim\limits_{i \rightarrow \infty} 2^{-i} r^{\prime} = 0 \implies \lim\limits_{i \rightarrow \infty} s^{(i)} = 0 \implies \lim\limits_{i \rightarrow \infty} s_{M^{(i)}} = 0$ and $i \rightarrow \infty \iff N \rightarrow \infty$; therefore,  $\lim\limits_{N \rightarrow \infty} s_{M^*} = 0$ where $M^* = M^{(i)}$.

\end{proof}

\begin{theorem}
\label{thm1}

If all $M$ data nuggets locally have approximately a continuous uniform distribution inside a region that is approximately a $P$-dimensional sphere and $s_1 \approx s_2 \approx \cdots \approx s_M \approx s$ for some $s \geq 0$, then ${\mathbf{\Sigma}} = \lim\limits_{N \rightarrow \infty}{\mathbf{S}} \approx \lim\limits_{N \rightarrow \infty}{\mathbf{S}_{DN}}$.

\end{theorem}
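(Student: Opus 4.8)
The plan is to reduce the claim to one exact algebraic identity together with a norm estimate that Lemma~\ref{lem1} then drives to zero. First I would record the between/within (MANOVA-type) decomposition of $\mathbf{S}$ induced by the $M$ data nuggets. Each row of $\mathbf{X}$ lies in exactly one nugget, say row $n$ in nugget $j(n)$; expanding $\mathbf{x}_n = (\mathbf{x}_n - \mathbf{c}_{j(n)}) + \mathbf{c}_{j(n)}$ inside $(N-1)\mathbf{S} = \sum_{n=1}^{N}\mathbf{x}_n\mathbf{x}_n^{\prime}$ (valid because $\mathbf{X}$ is centered at $\mathbf{\underaccent{\tilde}0}$) and using that the within-nugget deviations sum to zero --- which holds precisely because each $\mathbf{c}_j$ was taken to be the within-nugget mean (Algorithm~\ref{alg1}, mean option) --- kills the cross term and leaves
\[
(N-1)\mathbf{S} \;=\; \sum_{j=1}^{M}(w_j-1)\mathbf{\Delta}_j \;+\; \sum_{j=1}^{M} w_j \mathbf{c}_j \mathbf{c}_j^{\prime} \;=\; (N-1)\big(\mathbf{\Delta} + \mathbf{S}_{DN}\big).
\]
Hence $\mathbf{S} - \mathbf{S}_{DN} = \mathbf{\Delta}$ exactly, for every $N$, and the theorem reduces to showing $\lim_{N\to\infty}\mathbf{\Delta} = \mathbf{0}$. (With randomly chosen centers the cross term would survive and this clean identity would fail.)

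Next I would bound $\mathbf{\Delta}$ by the common scale. Writing $\zeta_j$ for the largest eigenvalue of the positive semidefinite matrix $\mathbf{\Delta}_j$, and noting $\zeta_j \le tr(\mathbf{\Delta}_j) = P s_j$, the triangle inequality in the spectral norm gives
\[
\|\mathbf{\Delta}\| \;\le\; (N-1)^{-1}\sum_{j=1}^{M}(w_j-1)\,\zeta_j \;\le\; \frac{P}{N-1}\sum_{j=1}^{M}(w_j-1)\,s_j .
\]
Since $\sum_{j=1}^{M}(w_j-1) = N-M$ and, by hypothesis, $s_j \approx s$ for all $j$, this yields $\|\mathbf{\Delta}\| \le P\,\frac{N-M}{N-1}\max_j s_j \approx P s$; the locally-uniform-on-a-$P$-sphere assumption gives the sharper $\mathbf{\Delta}_j \approx s_j\mathbb{I}_P$ and $\mathbf{\Delta}\approx s\,\mathbb{I}_P$, but the crude bound already suffices. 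With $P$ fixed, everything now hinges on whether $s\to 0$.

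To finish I would handle the two cases in ``$s\ge 0$''. If $s=0$ then every $\mathbf{\Delta}_j=\mathbf{0}$, so $\mathbf{\Delta}=\mathbf{0}$ and $\mathbf{S}=\mathbf{S}_{DN}$ identically; the claim then reduces to $\mathbf{S}\to\mathbf{\Sigma}$, i.e.\ consistency of the sample covariance. If $s>0$, I would invoke Lemma~\ref{lem1}: it exhibits a refinement procedure --- Algorithm~\ref{alg2} reapplied as the sample grows through $N, 2^P N, 2^{2P}N,\dots$, each old nugget split into $2^P$ roughly-spherical pieces of half the radius --- under which the common scale $s=s_{M^{*}}$ obeys $\lim_{N\to\infty}s_{M^{*}}=0$. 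Feeding this into the bound above gives $\|\mathbf{\Delta}\|\to 0$, hence $\lim_{N}\mathbf{S}_{DN}=\lim_{N}(\mathbf{S}-\mathbf{\Delta})=\lim_{N}\mathbf{S}=\mathbf{\Sigma}$.

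The hard part is not the algebra but controlling the approximations it rests on: the ``locally uniform inside a $P$-sphere'' hypothesis is only approximate, so the identification $\zeta_j\approx s_j$, the isotropy of $\mathbf{\Delta}_j$, and therefore the concluding ``$\approx$'' are all approximate, and a fully rigorous statement would require quantitative, uniform-in-$j$ control of $\|\mathbf{\Delta}_j - s_j\mathbb{I}_P\|$ (and of any drift of $\mathbf{c}_j$ away from the exact nugget means). It is also worth stressing that Lemma~\ref{lem1} does essential work here: it delivers $s\to0$ only along the discrete refinement sequence, so ``$\lim_{N\to\infty}$'' must be read along that sequence, and absent a scheme that forces $s\to0$ the within-nugget term $\mathbf{\Delta}$ converges to a nonzero positive multiple of $\mathbb{I}_P$, so that $\mathbf{S}_{DN}$ would underestimate $\mathbf{\Sigma}$ in the limit.
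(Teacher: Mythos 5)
Your proposal is correct and follows essentially the same route as the paper's proof: the between/within decomposition of $(N-1)\mathbf{S}$ over nuggets, the vanishing of the cross term because the centers are within-nugget means, and an appeal to Lemma~\ref{lem1} to drive the within-nugget term to zero. If anything your write-up is tighter than the paper's --- you state the decomposition as an exact identity $\mathbf{S}=\mathbf{S}_{DN}+\mathbf{\Delta}$ consistent with the definition $\mathbf{\Delta}=(N-1)^{-1}\sum_j(w_j-1)\mathbf{\Delta}_j$, add an explicit spectral-norm bound, and treat the $s=0$ case separately --- whereas the paper reuses $\mathbf{\Delta}_j$ for the uncorrected scatter $\sum_k\mathbf{\underaccent{\tilde}\delta}_{jk}\mathbf{\underaccent{\tilde}\delta}_{jk}^{\prime}$ and passes through the approximate form $\mathbf{S}\approx\mathbf{S}_{DN}+M(N-M)^{-1}\mathbf{\Delta}$ before taking the same limit.
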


\begin{proof}

Let:
 
\[ 
\mathbf{X} 
=
\begin{bmatrix}
\mathbf{\underaccent{\tilde}x}_1 \\
\mathbf{\underaccent{\tilde}x}_2 \\
\vdots \\
\mathbf{\underaccent{\tilde}x}_N
\end{bmatrix}
\]

where $\mathbf{\underaccent{\tilde}x}_i$ is the $1 \times P$ vector of responses for observation $i$. Further, let $\mathbf{\underaccent{\tilde}\delta}_{jk} = \mathbf{\underaccent{\tilde}c}_{j} - \mathbf{\underaccent{\tilde}x}_{jk}$ for $j = 1, 2, ..., M; k = 1, 2, ..., w_j$. First note that:

$$(N-1)\mathbf{S} = \mathbf{X}\mathbf{X}^{\prime} = \sum_{i=1}^{N}{\mathbf{\underaccent{\tilde}x}_i\mathbf{\underaccent{\tilde}x}_i^\prime}$$

Now observe

$$\sum_{i=1}^{N}{\mathbf{\underaccent{\tilde}x}_i\mathbf{\underaccent{\tilde}x}_i^\prime} = \sum_{k=1}^{w_1}{\mathbf{\underaccent{\tilde}x}_{1k}\mathbf{\underaccent{\tilde}x}_{1k}^\prime} + \sum_{k=1}^{w_2}{\mathbf{\underaccent{\tilde}x}_{2k}\mathbf{\underaccent{\tilde}x}_{2k}^\prime} + \cdots +  \sum_{k=1}^{w_M}{\mathbf{\underaccent{\tilde}x}_{Mk}\mathbf{\underaccent{\tilde}x}_{Mk}^\prime}$$

$$\sum_{j=1}^{M}{\sum_{k=1}^{w_j}{\mathbf{\underaccent{\tilde}x}_{jk}\mathbf{\underaccent{\tilde}x}_{jk}^\prime}} = \sum_{j=1}^{M}{\sum_{k=1}^{w_j}{(\mathbf{\underaccent{\tilde}c}_{j} - \mathbf{\underaccent{\tilde}\delta}_{jk})(\mathbf{\underaccent{\tilde}c}_{j} - \mathbf{\underaccent{\tilde}\delta}_{jk})^\prime}}$$

$$\sum_{j=1}^{M}{\sum_{k=1}^{w_j}{\big( \mathbf{\underaccent{\tilde}c}_{j}\mathbf{\underaccent{\tilde}c}_{j}^\prime - \mathbf{\underaccent{\tilde}c}_{j}\mathbf{\underaccent{\tilde}\delta}_{jk}^\prime - \mathbf{\underaccent{\tilde}\delta}_{jk}\mathbf{\underaccent{\tilde}c}_{j}^\prime + \mathbf{\underaccent{\tilde}\delta}_{jk}\mathbf{\underaccent{\tilde}\delta}_{jk}^\prime \big)}}$$

Since the data nuggets are re-centered after all observations are assigned, $\sum_{k=1}^{w_j}{\mathbf{\underaccent{\tilde}c}_{j}\mathbf{\underaccent{\tilde}\delta}_{jk}^\prime} = \sum_{k=1}^{w_j}{\mathbf{\underaccent{\tilde}\delta}_{jk}\mathbf{\underaccent{\tilde}c}_{j}^\prime} = \mathbf{0}_P$ where $\mathbf{0}_P$ is the $P \times P$ matrix of zeroes. So:

$$\sum_{j=1}^{M}{\sum_{k=1}^{w_j}{\big( \mathbf{\underaccent{\tilde}c}_{j}\mathbf{\underaccent{\tilde}c}_{j}^\prime - \mathbf{\underaccent{\tilde}c}_{j}\mathbf{\underaccent{\tilde}\delta}_{jk}^\prime - \mathbf{\underaccent{\tilde}\delta}_{jk}\mathbf{\underaccent{\tilde}c}_{j}^\prime + \mathbf{\underaccent{\tilde}\delta}_{jk}\mathbf{\underaccent{\tilde}\delta}_{jk}^\prime \big)}} = \sum_{j=1}^{M}{{\sum_{k=1}^{w_j}\mathbf{\underaccent{\tilde}c}_{j}\mathbf{\underaccent{\tilde}c}_{j}^\prime}} + \sum_{j=1}^{M}{\sum_{k=1}^{w_j}{\mathbf{\underaccent{\tilde}\delta}_{jk}\mathbf{\underaccent{\tilde}\delta}_{jk}^\prime}}$$

Observe that $\mathbf{\Delta}_j = \sum_{k=1}^{w_j}{\mathbf{\underaccent{\tilde}\delta}_{jk}\mathbf{\underaccent{\tilde}\delta}_{jk}^\prime}$ for $j=1, 2, ..., M$. So:

$$\sum_{j=1}^{M}{{\sum_{k=1}^{w_j}\mathbf{\underaccent{\tilde}c}_{j}\mathbf{\underaccent{\tilde}c}_{j}^\prime}} + \sum_{j=1}^{M}{\sum_{k=1}^{w_j}{\mathbf{\underaccent{\tilde}\delta}_{jk}\mathbf{\underaccent{\tilde}\delta}_{jk}^\prime}} = \sum_{j=1}^{M}{{w_j\mathbf{\underaccent{\tilde}c}_{j}\mathbf{\underaccent{\tilde}c}_{j}^\prime}} +  \sum_{j=1}^{M}\mathbf{\Delta}_j$$

and

$$\mathbf{S} = (N-1)^{-1}\sum_{j=1}^{M}{{w_j\mathbf{\underaccent{\tilde}c}_{j}\mathbf{\underaccent{\tilde}c}_{j}^\prime}} +  (N-1)^{-1}\sum_{j=1}^{M}\mathbf{\Delta}_j = \mathbf{S}_{DN} +  (N-1)^{-1}\sum_{j=1}^{M}\mathbf{\Delta}_j$$

Since all $M$ data nuggets have approximately identical scales equivalent to $s$ for some $s > 0$, $\sum_{j=1}^{M}\mathbf{\Delta}_j \approx Ms^2 \mathbb{I}_P$ and $\mathbf{\Delta}  \approx {(N-1)}^{-1}(N-M)s^2 \mathbb{I}_P$, so:

$$\mathbf{S} \approx \mathbf{S}_{DN} + M(N-M)^{-1}\mathbf{\Delta}$$

By \textbf{Lemma \ref{lem1}} and using the method provided in the proof of \textbf{Lemma \ref{lem1}}:

$$\lim_{N \rightarrow \infty} M(N-M)^{-1}\mathbf{\Delta} \approx  \lim_{N \rightarrow \infty} M(N-1)^{-1} s^{2} \mathbb{I}_P \approx \lim_{N \rightarrow \infty} M(N-1)^{-1} s^2_{M^*} \mathbb{I}_P =  \mathbf{0}_P$$

Therefore:

$${\mathbf{\Sigma}} = \lim_{N \rightarrow \infty} \mathbf{S} \approx \lim_{N \rightarrow \infty} \mathbf{S}_{DN} + \lim_{N \rightarrow \infty} M(N-M)^{-1} \mathbf{\Delta} = \lim_{N \rightarrow \infty} \mathbf{S}_{DN}$$
\end{proof}

By \textbf{Theorem \ref{thm1}}, given a large enough sample size and a large enough number of corresponding data nuggets, the amount of variability within the dataset is preserved when the dataset is reduced without accounting for the within-data-nugget variability of each data nugget. As such, this within-data-nugget variability can be ignored when applying statistical techniques to the data nuggets.

\subsection{K-means Clustering for weighted observations}

Using the idea from K means, we now introduce a clustering algorithm - K-means Clustering for weighted observations such as Data Nuggets. It is worth noting that other weighted K-means clustering methods have been developed. An example is an algorithm that can be used for analyzing social networks \citep{Liu2014}. This algorithm is designed for the purpose of finding clusters of nodes in a social network where weights are assigned to the edges that connect the nodes. The weights of the edges are described as the ``intimacy" level between the two nodes that the edge connects. In our algorithm, the weights of each data nugget are a measure of how many observations from the original dataset are contained in the data nugget. 

We describe a method of K-means clustering for weighted observations to form clusters of data nuggets with the algorithm below.

\begin{algorithm}
\label{alg3}

Conduct K-means clustering for weighted observations to form clusters of data nuggets given: $M$ data nuggets; $K$, the number of clusters to be created; $\mathbf{\underaccent{\tilde}w}$, the $M \times 1$ vector containing the weight of each data nugget.

\end{algorithm}

Let the total weighted within-cluster sum of squares, $\Omega$, be defined by 

$$\Omega \equiv \sum_{k=1}^{K}\sum_{\mathbf{\underaccent{\tilde}x} \in L_k}{\mathbf{\underaccent{\tilde}w}^{\prime}(\mathbf{\underaccent{\tilde}x} - \mathbf{\underaccent{\tilde}\mu}_k)^{\prime}(\mathbf{\underaccent{\tilde}x} - \mathbf{\underaccent{\tilde}\mu}_k)}$$

where $\{L_k | j = 1, 2, ..., K  \}$ is the set of $K$ clusters and $\{ \mathbf{\underaccent{\tilde}\mu}_k | k = 1, 2, ..., K \}$ the set of respective cluster centers.

\begin{enumerate}

\item Choose $K$ data nugget centers to be the initial cluster centers, $\mathbf{\underaccent{\tilde}\mu}_{01}, \mathbf{\underaccent{\tilde}\mu}_{02}, ..., \mathbf{\underaccent{\tilde}\mu}_{0K}$, of $K$ clusters, $L_1,L_2, ...,L_K$, respectively. 

\item For $j = 1, 2, ...,  M$, assign data nugget $j$ to the cluster $L_k$ for which $D(\mathbf{\underaccent{\tilde}c}_j,\mathbf{\underaccent{\tilde}\mu}_{0k})$ is minimized over $k = 1, 2, ... K$ where $\mathbf{\underaccent{\tilde}c}_j$ is the center of data nugget $j$  and $D$ is the Euclidean distance metric.

\item Recalculate the cluster centers $\mathbf{\underaccent{\tilde}\mu}_1, \mathbf{\underaccent{\tilde}\mu}_2,..., \mathbf{\underaccent{\tilde}\mu}_K$ as the mean of the centers of all the data nuggets within clusters $L_1,L_2, ..., L_K$, respectively.
	
\item For $J = 1, 2,... ,  M $   data nuggets:

	\begin{enumerate}[label=\theenumi.\arabic*]
	
	\item Retrieve the cluster assignment for data nugget $j$, $L_{\left(k^{\prime}\right)}$, calculate $\Omega$, and let $\Omega_{\left(k^{\prime}\right)} = \Omega$.
	
	\item Reassign data nugget $j$ to every cluster in $\{L_1,L_2, ..., L_K\}\setminus L_{\left(k^{\prime}\right)}$ and calculate $\Omega$ for each of the $K-1$ possible reassignments, $\{\Omega_1,\Omega_2, ...,\Omega_K\}\setminus \Omega_{\left(k^{\prime}\right)}$, where $\Omega_k$ is the total weighted within-cluster sum of squares when data nugget $j$ is assigned to cluster $L_k$ for $k = 1,2, ..., K$ and $k\neq\left(k^{\prime}\right)$.
	
	\item If $\Omega_k=min\left(\{\Omega_1,\Omega_2,... ,\Omega_K\}\right)<\Omega_{\left(k^{\prime}\right)}$, reassign data nugget $j$ to cluster $L_k$ and recalculate the cluster centers $\mathbf{\underaccent{\tilde}\mu}_1, \mathbf{\underaccent{\tilde}\mu}_2, ..., \mathbf{\underaccent{\tilde}\mu}_K$ as the mean of the centers of all the data nuggets within clusters $L_1,L_2, ...,L_K$, respectively.

	\end{enumerate}
	
\item Repeat step 4 until step 4 is completed without executing step 4.3.

\end{enumerate}

The outcome of \textbf{Algorithm \ref{alg3}} can be improved by repeating the algorithm with multiple choices for the initial centers chosen in step 1. The clustering assignments that minimize the total weighted within the cluster sum of squares would then be chosen as the clustering configuration. Further, it may take an extremely long time for the algorithm to converge. As such, a limit can be placed on the number of times step 4 is executed before the algorithm ends. To illustrate the usefulness of \textbf{Algorithm \ref{alg3}}, we conducted numerical analyses using example 1 and example 2 in the following part.  
\newline
\textbf{Example 1}\\ 
This simulated dataset is meant to mimic a list of 300,000 patients and whether they suffer from a list of ten conditions, and it is used to compare the performances of weighted k-means with data nuggets and k-means with raw data. We separate the data into three clusters based on the set of conditions these patients suffer from. Let $p$ be the probability of having any condition and three clusters as
$\mathbf{\underaccent{\tilde}x} =(x_1, x_2,..., x_{10})$, $\mathbf{\underaccent{\tilde}y} =(y_1, y_2,..., y_{10})$ and $\mathbf{\underaccent{\tilde}z} =(z_1, z_2,..., z_{10})$. Here $x_i\sim Bin\left(1,1-p\right)$ for $i = 1,2,3,4,5$, $x_i\sim Bin\left(1,p\right)$ for $i = 6,7,8,9,10$, $y_i\sim Bin\left(1,p\right)$ for $i = 1,2,3,4,5$, $y_i\sim Bin\left(1,1-p\right)$ for $i = 6,7,8,9,10$, and $z_i\sim Bin\left(1,p\right)$ for $i = 1,2, ...,10$. 
Each cluster contains 100,000 observations and these 300,000 observations are then placed together in a dataset.

Since these observations are binary, there are at most $2^{10}=1024$ possible unique observations. Each of these observations will represent a data nugget, and the weight of the data nugget is the number of times this data nugget center appears in the dataset. Using the K-means clustering for weighted observations method given by \textbf{Algorithm \ref{alg3}}, we assign each data nugget to one of three clusters. In addition, we also fit the K means cluster model for the raw data as a benchmark here. Then, we find the proportion of correct cluster assignments for every possible permutation of cluster assignments and choose the cluster configuration that produces the best result for each method.

Repeating this process for 100 iterations, we found the mean proportion of data points correctly reassigned to their proper cluster for each method for different $p$.  The simulation results are given in Table \ref{tab:p8}. It is clear to see that, compared with the benchmark result using Kmeans for the original data directly, the K-means for weighted observations algorithm performs exactly the same here. 

\begin{table}[h!]  

\footnotesize
\centering
 \begin{tabular}{|c|c|c|c|c|c|c|c|} \hline
 $p$ & 0.80 & 0.82 & 0.84 & 0.86 & 0.88 & 0.90  \\ \hline
  K-means for weighted observations & 0.9185&0.9388&0.9558&0.9661&0.9803&0.9883\\ \hline
   K-means for raw data & 0.9185&0.9388&0.9558&0.9661&0.9803&0.9883\\ \hline 
 \end{tabular} 
\label{tab:p8} 
\caption{Correct Cluster Classification Simulation Results}
\end{table}

\textbf{Example 2}\\ 
A simulated large multivariate data set with known cluster structures was employed and compared with other state-of-the-art models. The simulated dataset is a large 6-dim data with four unequal-sized clusters and its overall size is 1052000. It was generated by firstly assigning four cluster centers, and then obtaining data for each cluster by sampling from a multivariate Gaussian distribution with mean as the cluster center and a specified covariance matrix. The covariance matrices were the identity times a constant variance. The cluster centers, sizes, and variances are listed in Table \ref{table: accuracy}. The challenge in this dataset is the small $3^{\text {rd }}$ and $4^{\text {th }}$ clusters, which would be hard to find within the large dataset.

\begin{table}[h!]
\footnotesize
\centering
\label{table: accuracy} 
\begin{tabular}{|l|l|l|l|l|}
\hline & Cluster 1 & Cluster 2 & Cluster 3 & Cluster 4 \\
\hline cluster size & 500000 & 500000 & 50000 & 2000 \\
\hline cluster center & $(1,0,0,0,1,1)$ & $(0,1,0,1,1,0)$ & $(1,1,0,0,1,0)$ & $(0,0,1,1,0,1)$ \\
\hline 
Variance on each
dim & 0.25 & 0.25 & 0.25 & 0.25 \\
\hline
\end{tabular} 
\caption{Cluster information for the simulated data}
\end{table}
For this large dataset, applying clustering algorithms on the whole dataset is not computationally possible, but the data nuggets created and refined by the proposed method could be used to reduce the data size while preserving the data structure. The clustering results may be obtained by applying the weighted K-means method to the data nuggets. For comparison, the random sampling method with $\mathrm{K}$-means and the BIRCH clustering method were also employed on this dataset. $\mathrm{BIRCH}$ (balanced iterative reducing and clustering using hierarchies) is a commonly used algorithm to perform clustering over particularly large datasets \citep{saeed2020big, shirkhorshidi2014big, zhang1996birch}. It first builds a clustering feature (CF) tree out of the data points, and then applies hierarchical clustering to cluster all leaf entries \citep{zhang1997birch}. To account for the variability of each method, the data nugget creation and refinement with weighted K-means, and the random sampling with K-means, were performed ten times. During each time, around 3200 data nuggets were created and refined, and for comparison, random samples with 3200 points were generated from the whole data. The BIRCH method was also employed ten times with the threshold parameter randomly sampled from a uniform distribution Unif $(0.7,1.2)$. The threshold parameter affects the merge of subclusters, and lower values promote splitting during building the $\mathrm{CF}$ tree. The data nuggets and the random sampling methods were implemented using $\underline{\mathrm{R}}$ but the $\mathrm{BIRCH}$ method would not run in $\mathrm{R}$ for such a large data set, instead the python implementation of $\mathrm{BIRCH}$ was able to run the ten times.

To demonstrate the performance of clustering, the clustering results were firstly aligned with the true clusters, and then the accuracies were calculated for each cluster. The boxplots of the accuracies from ten repetitions by different methods are presented in Figure \ref{fig:compare} for each cluster separately.

\begin{figure}[h!]

\centerline{\includegraphics[width = 340pt]{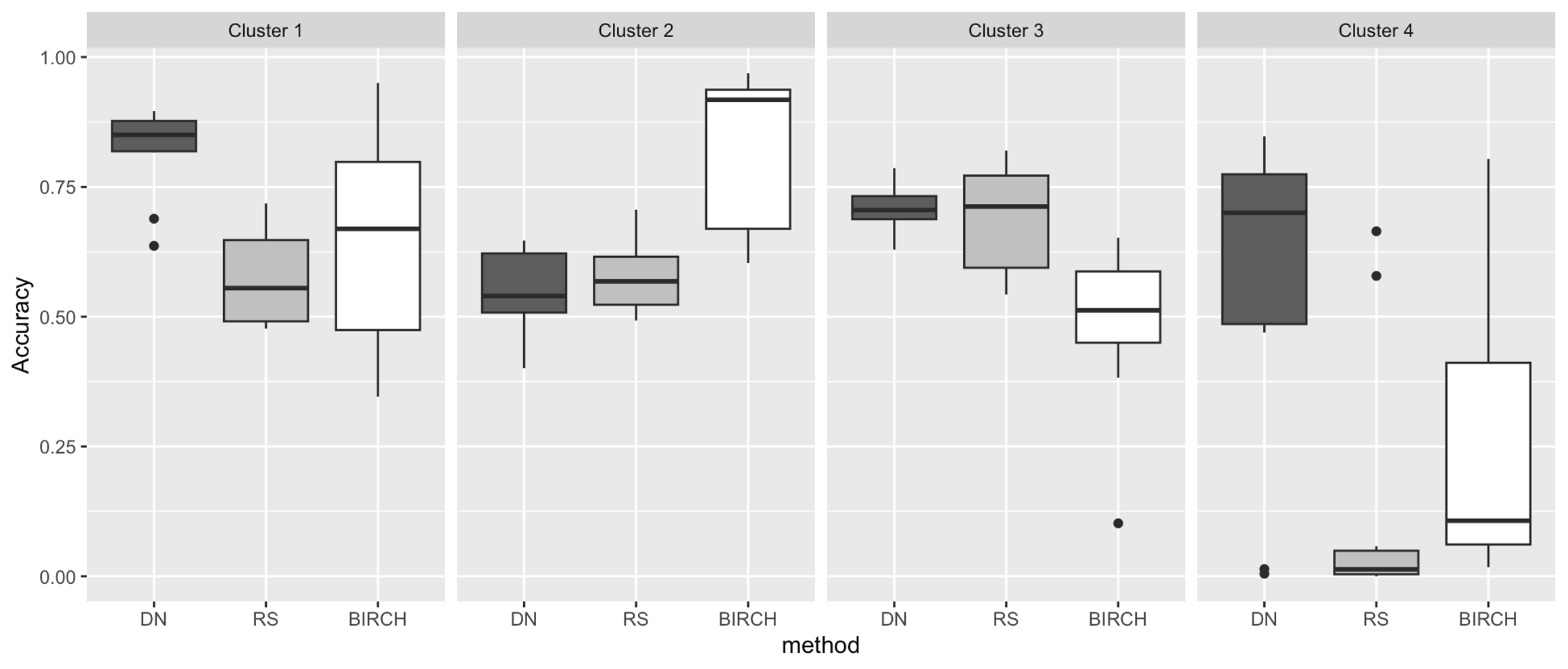}}

\caption{Boxplots of accuracies for each cluster by data nugget method (DN), random sampling (RS), and BIRCH algorithm.}

\label{fig:compare}

\end{figure}
Based on this example, distinct methodological performances were observed across various clusters. The data nugget method exhibited the highest mean accuracy and the smallest variance for the first primary cluster, while BIRCH performed well in terms of mean accuracy for the second main cluster but showed a larger variance indicating less consistency. When dealing with the third, smaller cluster, the data nugget method achieved both a high mean accuracy and the lowest variance, highlighting its effectiveness in handling smaller clusters. In the most challenging scenario, cluster 4, the smallest in size, the data nugget method significantly outperformed all others, with only two outliers failing to identify the cluster. Random sampling struggled to locate this smallest cluster within the large dataset, with only two samples containing information about the cluster that achieved high accuracies. The results from this simulated dataset revealed that random sampling could identify the two main clusters but did so with a higher variance when it came to the third, smaller cluster. Moreover, it faced considerable difficulty in detecting the smallest cluster within the extensive dataset. The BIRCH algorithm exhibited greater variability in its results due to its threshold parameter, and it did not perform well when dealing with small clusters. In contrast, the data nugget method consistently found all clusters with high accuracies and low variances, especially excelling in exploring the two smaller clusters within the vast multivariate dataset.

In summary, random sampling is computationally very efficient, but it may sacrifice some information about the data structure during the sampling process, particularly smaller structures within large datasets may not be detected. Similarly, the performance of the BIRCH algorithm depended heavily on the threshold parameter chosen. However, the created and refined data nuggets could effectively preserve hidden structures within big data that are commonly overlooked. These data nuggets could be treated as weighted data points and applied to existing methods to enhance the exploration of the structure of large datasets.

\subsection{Data Nuggets vs. Support Points}

In Section 1 we mentioned another method of producing representative datasets called support points given by Mak and Joseph. The goals of the data nuggets and the support points are the same on the surface: to create a small dataset that represents the large dataset it comes from. That being said, the resulting representative datasets may differ greatly in terms of producing the correct quantiles corresponding to the highest percentiles of the probability distributions they are meant to represent.

This is by design in the case of support points, since they are defined as a set of M points in the dataset which has the best goodness of fit to the underlying distribution governing the dataset in terms of energy distance as defined in \citep{Sze2013}. This definition forces support points to be chosen as observations that exist near the center of the data, ultimately forsaking observations that exist at the extreme edges of the data.

Data nuggets on the other hand are designed to avoid this problem. Since the algorithm that creates data nuggets chooses observations to delete based on how close they are to other observations, observations on the edges of the data are safer from elimination and are usually guaranteed to remain as an initial data nugget center. The fact that this observation is at the edge of the data is not lost since it will be reflected in a low-weight parameter for the data nugget.

\begin{figure}[h!]

\centerline{\includegraphics[width=340pt]{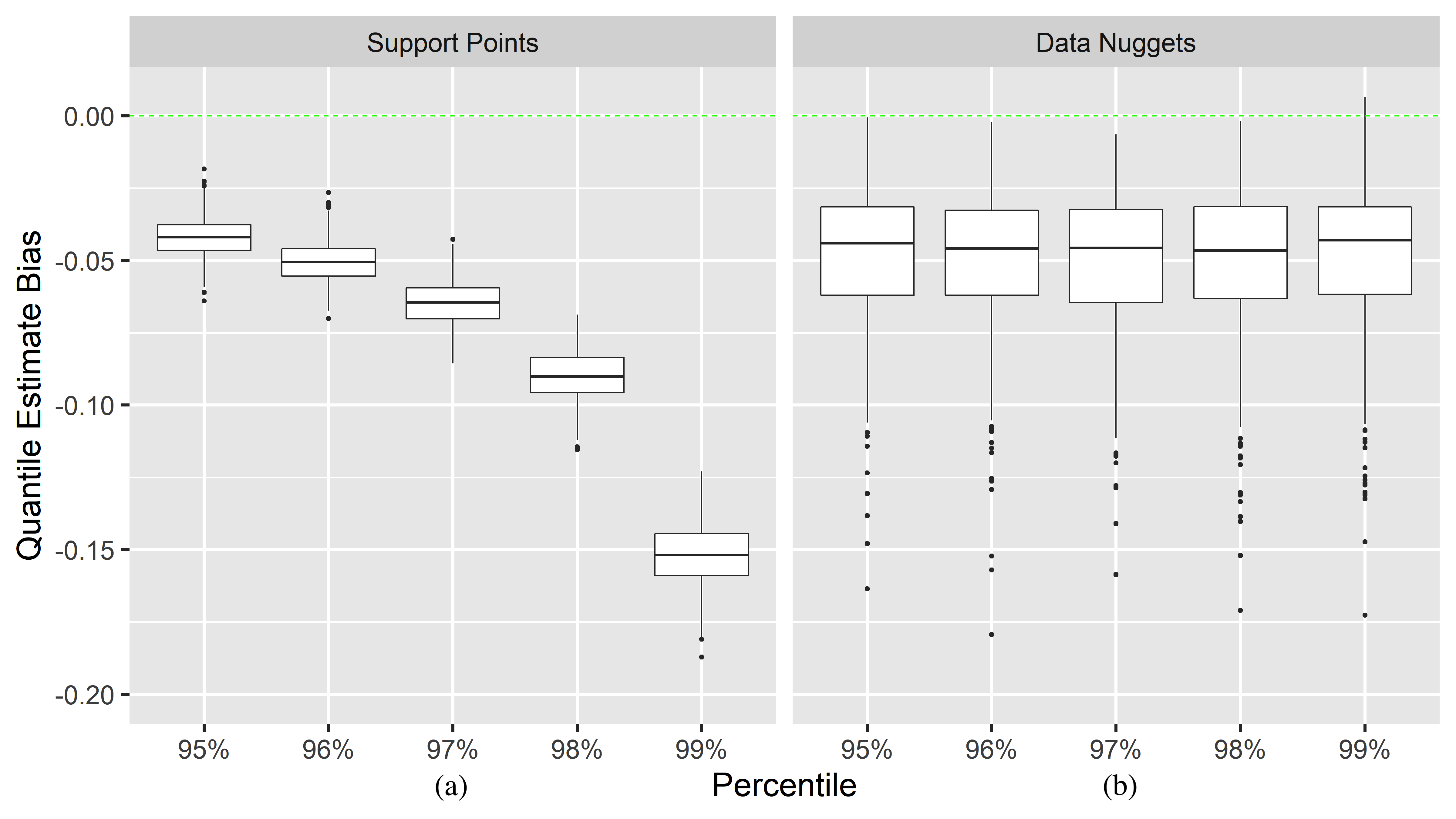}}

\caption{Quantile Bias Simulation Results}

\label{fig:QBias}

\end{figure}

We now produce the results of a simulation that examines this difference in a 1-dimensional setting. The simulation was conducted by randomly sampling 100,000 observations from a standard normal distribution. Let this random sample of observations be $\mathbf{\hat{\underaccent{\tilde}z}}$. 100 support points and 100 data nuggets are then created from $\mathbf{\hat{\underaccent{\tilde}z}}$. The support points were generated using the R package ``support'' created by Mak \citep{SoloMak2019}. The data nuggets were generated with \textbf{Algorithm \ref{alg1}} by choosing $\mathbf{X}=\mathbf{\hat{\underaccent{\tilde}z}}$, data nugget centers chosen to be the mean, $R = 5000$, $C = .1$, $M_{init}=1,000$, $M=100$, and $D$ to be the Euclidean distance metric. The data nuggets are then ordered by their centers, from least to greatest.

We then compute the quantiles corresponding to the ${95}^{th}$, ${96}^{th}$, ${97}^{th}$, ${98}^{th}$ and ${99}^{th}$ percentiles for each representative dataset. The quantiles for the support points are calculated in the typical fashion; however, calculating the quantiles for the data nuggets requires a more thoughtful process. 

First, a linear regression model is fit with the cumulative sums of the data nugget weights divided by 100,000 as the predictor variable and the data nugget centers as the response variable. Then, .95, .96, .97, .98, and .99 are plugged into the resulting regression equation to produce the quantiles corresponding to those percentiles for the data nuggets. 

Finally, the true quantiles for a standard normal distribution are subtracted from the quantiles calculated for each method to calculate the bias of each quantile for each method. This simulation was repeated for 500 sets of $\mathbf{\hat{\underaccent{\tilde}z}}$ and Figure \ref{fig:QBias} shows the results. Figure \ref{fig:QBias} and all figures that follow were produced using the R package ``ggplot2'' \citep{ggplot2016}. The box plots in Figure \ref{fig:QBias} represent the distribution of quantile estimate bias for each corresponding percentile for each method.

 It is clear to see that support points perform poorly in terms of median bias compared to data nuggets for calculating the quantiles at the upper tail of the normal distribution, even though there is a little bit of mean bias due to the skewness. Also note that since the bias for the quantiles given by the data nuggets is consistent across the percentiles, there may be a simple correction constant that can be applied to each quantile to eliminate the bias entirely.

\subsection{Effectiveness When $P$ is Large}

When the number of variables is large, questions can reasonably be raised about whether or not data nuggets remain effective at capturing the structure of the entire dataset. A simulation was created to examine this question.

First, the $600,000 \times 3$ data matrix $\mathbf{X_1}$ was formed by combining the observations yielded from randomly sampling 200,000 vectors from $N_3((0,0,10)^{\prime}, \mathbf{\Sigma_{X}})$, 200,000 vectors from $N_3 \Big( \Big( 0,\frac{6}{\sqrt{2}},\frac{6}{\sqrt{2}} \Big)^{\prime}, \mathbf{\Sigma_{X}} \Big)$, and 200,000 vectors from $N_3 \Big( \Big( \frac{10}{\sqrt{3}},\frac{10}{\sqrt{3}},\frac{10}{\sqrt{3}}\Big)^{\prime}, \mathbf{\Sigma_{X}} \Big)$, where: 

\[ 
\mathbf{\Sigma_{X}} 
=
\begin{bmatrix}
4 & 0 & 0 \\
0 & 2.25 & 0 \\
0 & 0 & 1
\end{bmatrix}
\]

Second, the $600,000 \times 3$ data matrix $\mathbf{X}_2$ was formed by randomly sampling 600,000 vectors from $N_{197}(\mathbf{0}^{\prime}, \mathbb{I}_{197})$, where $\mathbf{0}$ is the vector containing only 0's. Third, the data matrix $\mathbf{X}_3$ was formed by horizontally concatenating $\mathbf{X}_1$ and $\mathbf{X}_2$ so that $\mathbf{X}_3 = [\mathbf{X}_1 \mathbf{X}_2]$. Fourth, a $200 \times 200$ random rotation $\mathbf{T}$ was created by randomly sampling 200 vectors from $N_{200}(\mathbf{0}^{\prime}, \mathbb{I}_{200})$. Finally, the data matrix $\mathbf{X}$ was formed by applying the random rotation $\mathbf{T}$ to $\mathbf{X}_3$ so that $\mathbf{X_T} = \mathbf{T}\mathbf{X}_3$. 

2,000 data nuggets were then generated with \textbf{Algorithm \ref{alg1}} by choosing $\mathbf{X}=\mathbf{X_T}$, data nugget centers chosen to be random, $R = 5,000$, $C = .1$, $M_{init}=10,000$, $M=2,000$, and $D$ to be the Euclidean distance metric. To compare the structure of these two datasets, principal components were generated for $\mathbf{X_T}$, and principal components for weighted observations were generated for the 2,000 data nuggets. All principal components were found using the ``Wpca'' function from the developed R package ``WCluster''. Note that the weights entered for the ``Wpca'' function when creating the principal components for the entire dataset were all equal to 1 while the principal components for data nuggets were weighted according to the weights of the data nuggets. Also, note that some principal components have been multiplied by -1 to make the similarity in the structures of the data more apparent.



\begin{figure}[h!]

\centerline{\includegraphics[width=340pt]{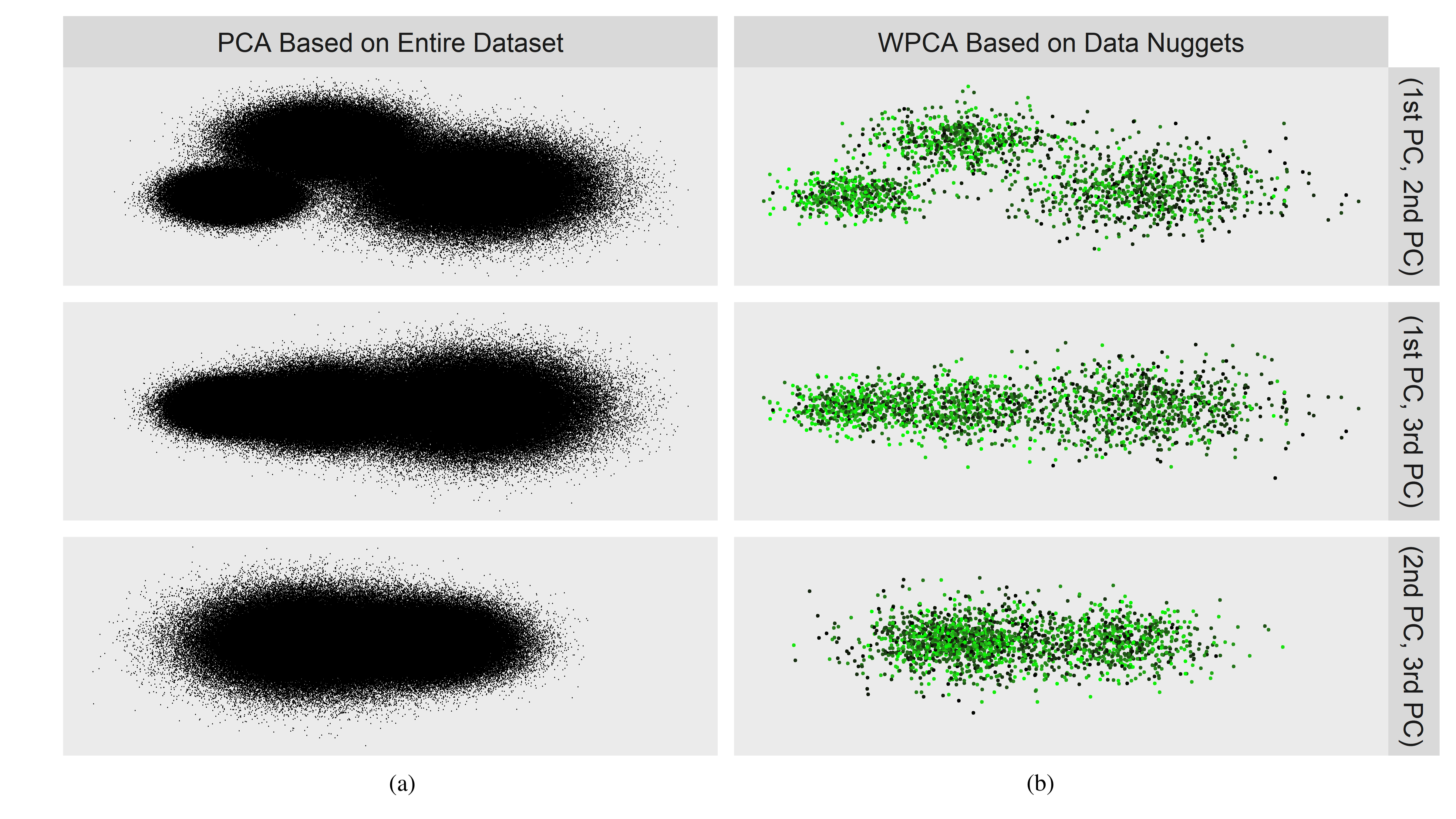}}

\caption{PCA Plots of $\mathbf{X_T}$ vs WPCA Plots of Data Nuggets}

\label{fig:largeP}

\end{figure}

In Figure \ref{fig:largeP}, pairwise combinations of the first, second, and third principal components of the entirety of $\mathbf{X_T}$ (a) are shown beside the same pairwise combinations of the first, second, and third weighted principal components of the 2,000 data nuggets (b). The intensity of each data nugget corresponds to the weight of the data nugget using weight median as the threshold: lighter intensity indicates a large weight while darker intensity indicates a low weight.  Clearly, the data nuggets reproduce the structure of the first three principal components of $\mathbf{X_T}$ in its entirety.



\section{Application}

Lastly, we applied this algorithm to the analysis of a high-dimensional ($1,048,575 \times 9$) flow cytometry dataset generated following protein immunization in mice.

We begin by using \textbf{Algorithm \ref{alg1}}, choosing data nugget centers chosen to be the mean, $R = 5,000$, $C = .01$, $M_{init} = 10,000$, $M = 2,000$, and $D$ to be the Euclidean distance metric to create 2,000 data nuggets. We then refined the data nuggets with \textbf{Algorithm \ref{alg2}}, choosing $\nu = .25$ and $N_{min}=2$ which resulted in 3,135 data nuggets.  

\begin{figure}[h!]

\centerline{\includegraphics[width=340pt]{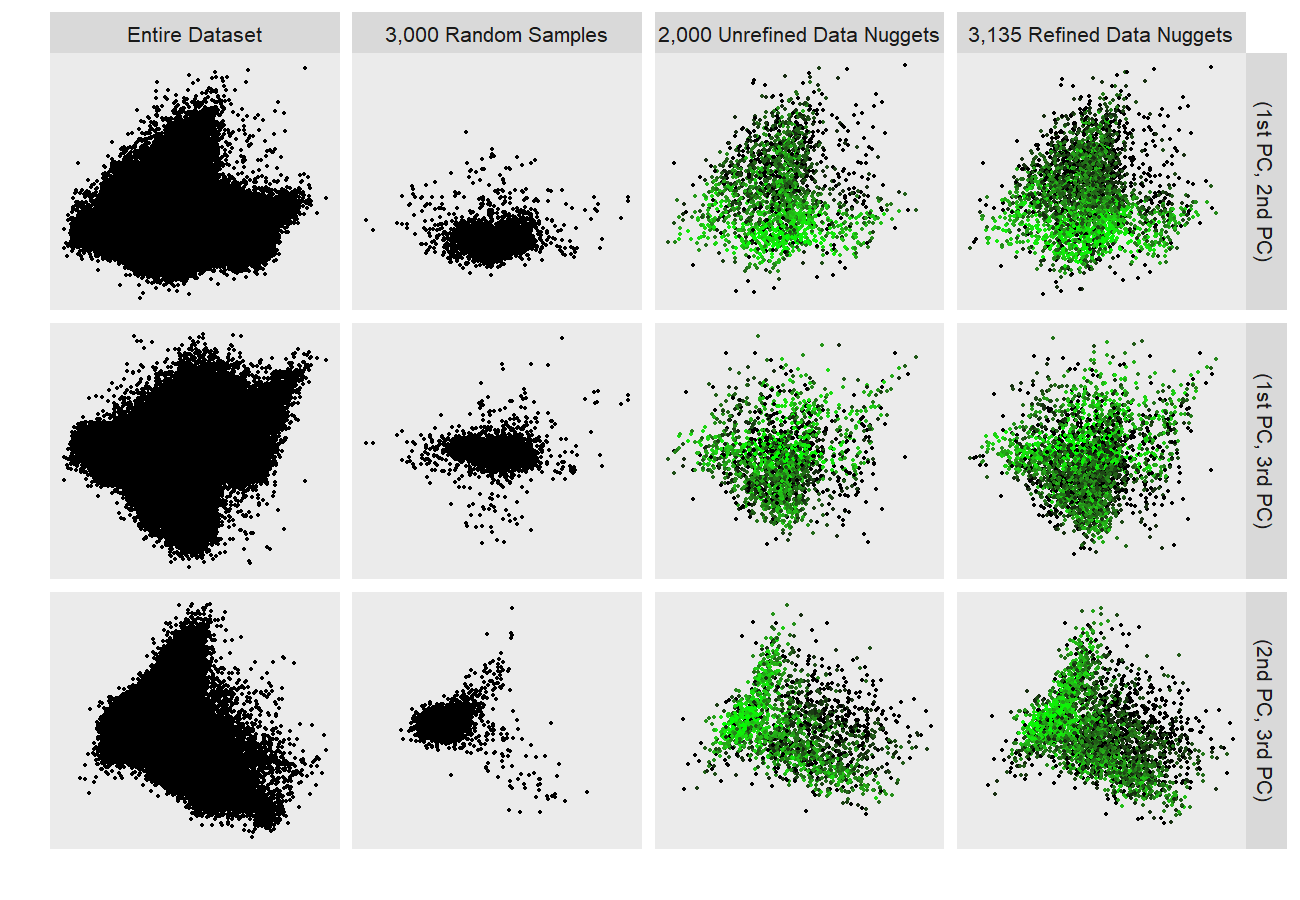}}

\caption{PCA Plots of Entire Dataset vs WPCA Plots of Data Nuggets}

\label{fig:realdata1}

\end{figure}


The first three pairwise combinations for principal components of the entire dataset (a) and 3000 random data samples (b), for principal components for weighted observations of the initial 2,000 data nuggets (c) and the refined 3,135 data nuggets (d) are given in Figure \ref{fig:realdata1} to compare of the resulting data structures. Once again, note that the weights entered for the ``Wpca'' function when creating the principal components for the entire dataset were all equal to 1 while the principal components for the data nuggets were weighted according to the weights of the data nuggets. Also note that some of the component scores for some principal components have been multiplied by -1, swapped, and/or shifted to make the similarity in the structures of the data more apparent. Once again, the intensities of each data nugget corresponds to the weight of the data nugget. Lighter intensity indicates weight larger than mean weight while darker intensity indicates weight smaller than mean weight. 

Figure \ref{fig:realdata1} also shows that the structure of the data regarding the first three principal components is moderately recovered with the original 2,000 data nuggets and strongly recovered with the 3,135 refined data nuggets, but not recovered well by the 3000 random samples from the data. Some information about the data structure is missed by random sampling but the data nuggets maintain it. Recall that the original dataset contains over 1 million observations, so the fact that less than 1\% of these observations can be chosen and still produce a relatively strong representation of the structure of the data is noteworthy.

\begin{figure}[h!]

\centerline{\includegraphics[width=\textwidth]{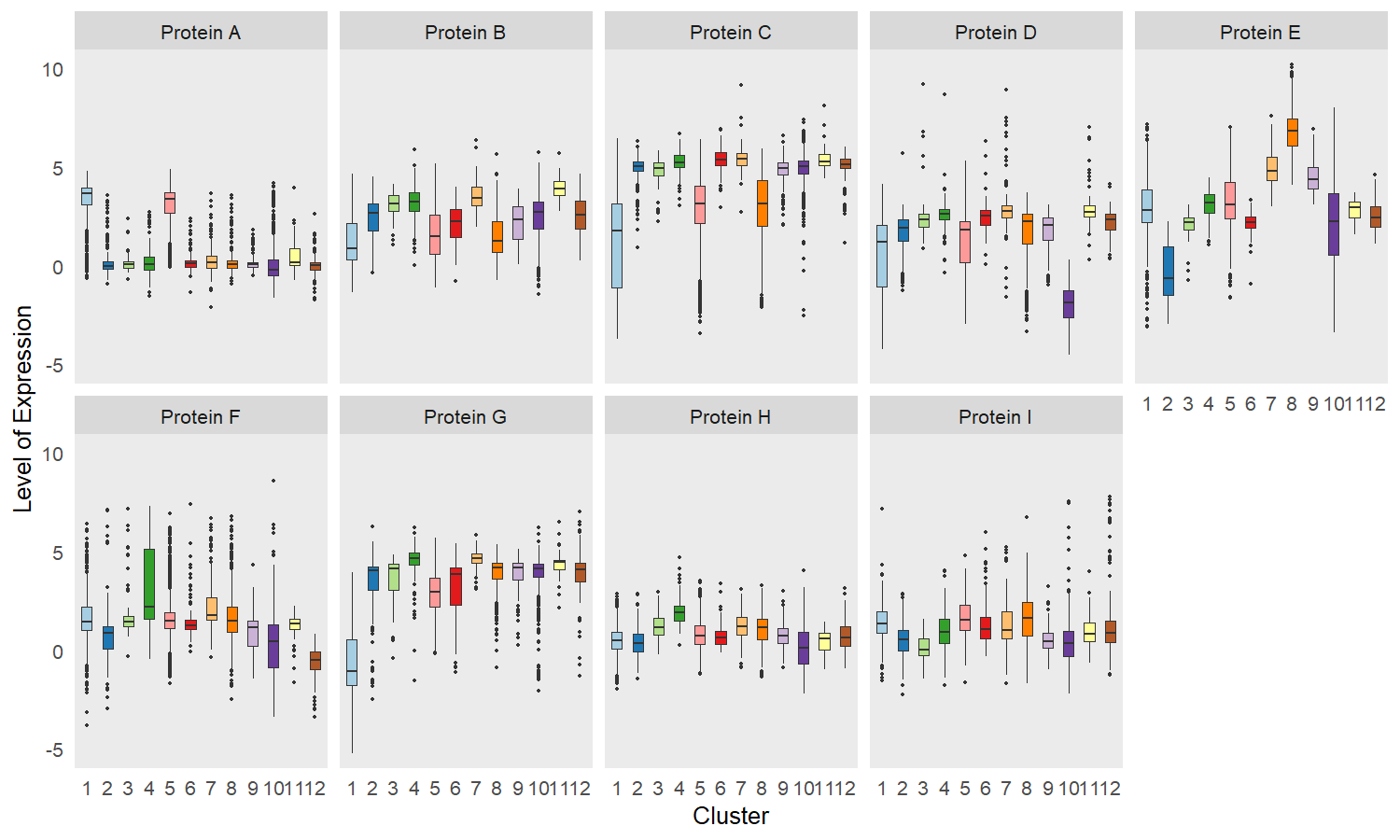}}

\caption{Levels of Expression for Each Protein and Cluster Combination}

\label{fig:realdata3}

\end{figure}

Next, we configured the refined data nuggets into clusters using \textbf{Algorithm \ref{alg3}} to perform weighted K-means clustering. Ten initial centers were used and the cluster configuration with the least weighted within-cluster sum of squares (WWCSS) was chosen. Different numbers of clusters were chosen from 5 to 15, and compared by WWCSS for each number of clusters. The best number of clusters was chosen as 12 using the second difference method which is described in Section 3.1. The protein level profiles in different clusters enable scientists to identify the cell type and function of the cells in each cluster. Ultimately, these clusters will result in counts for each cell type and function that will be used for further statistical analysis and biological interpretation. Finally, we created box plots for each cluster which summarize the level of expression within the cluster for each protein  (measured via each component of the data nugget center) to search for whether any clusters show any visually significant levels of expression of any proteins. These box plots are given in Figure \ref{fig:realdata3}, and could then be examined by the scientists to see if any meaningful clusters have appeared.

\section{Conclusion}

We have detailed a method for reducing ``Big Data" using data nuggets. First, we offer a K-means algorithm for weighted observations to cluster these data nuggets and provide simulation results that show that this algorithm outperforms the K-means clustering algorithm for data nuggets yielded from binary data. Then, we displayed the distinction between data nuggets and support points in the context of quantile bias at the upper tail of a normal distribution using a simulation, showing that there is a greater level of bias when these quantiles are calculated with support points. Further, we used a simulation to demonstrate that even for datasets with large $P$, data nuggets are capable of capturing the structure of the dataset. This framework clearly illustrates two major advantages of data nuggets. First, the standard method implemented in R like k-means clustering and hierarchical cluster ($hclust$ function or $pam$ function) cannot be run for the data is over 1 million observations, however, we can easily obtain the result through k means for weighted observations and the result here is very comparable to k mean on the raw dataset, which is shown by one of our binary data examples. Second, data nuggets can maintain the data structure well, and we compare with random samples for face example, support points for simulated normal distribution, and PCA on random samples for flow cytometry data, all of them show that you can’t capture the structure well if using these three method compared to data nugget.

The R packages ``datanugget''  and ``WCluster'' have been developed to execute the methods described in this paper. They include functions for generating, refining, and clustering data nuggets using K-means clustering for weighted observations. They have been published and are available on CRAN now.

Future work could be done to show how well the data nuggets work when other mainstream statistical techniques are applied. We have already shown how well data nuggets can work when unsupervised methods such as principal components and clustering are applied. Another unsupervised method of interest that could be applied is projection pursuit \citep{Fri1974}. The efficacy of data nuggets could also be observed in the context of supervised methods such as logistic regression and linear regression.  

In the case of logistic regression, the response for each data nugget would be the number of ``successful" and ``unsuccessful" observations contained in the data nugget. In the case of linear regression, the response for each data nugget would be the mean of the responses of the observations contained in the data nugget, and weighted least squares regression could be applied. The weight of each data nugget (potentially combined with the variance of the response variable for each data nugget) would be used as the weight in the regression model.

An important area of improvement for this method would be to find the optimal number of data nuggets for any given sample size. Simulations involving large classified continuous datasets could also be created to determine how much better K-means clustering for weighted observations performs compared to K-means clustering of data nuggets in a continuous setting. 

Another area of interest is showing that the results of the simulation in Section 3.2 hold for higher dimensions. Work could also be done to provide a correction for the constant bias in estimating the quantiles with data nuggets. Research into asymptotic results regarding how well the probability distribution can be returned through estimation of the mean and covariance of data nuggets generated from a random sample of this probability distribution as the number of data nuggets increases to infinity would be useful as well. 

\newpage

\bigskip

\begin{center}
{\large\bf SUPPLEMENTARY MATERIAL}
\end{center}

\begin{description}

\item[R packages for algorithms:] R packages ``datanugget'' and  ``WCluster'' containing functions that perform algorithms described in this paper. (tar.gz file)

\item[Flow Cytometry Dataset used in Section 4:] Flow cytometry dataset that has been masked, permuted, and had random noise added to it. (.RData file)

\item[R Code for output:] Code that produces all figures and tables found in this paper. (.R file)

\item[Appendix on computational cost:]
The computational cost of this algorithm contains three parts$O \Big( P* \Big[ G(\frac{R(R-1)}{2})(\psi_1) + \frac{(M_{init})(M_{init}-1)}{2}(\psi_2) + (N+1)M \Big] \Big)$,  in terms of arithmetic complexity can be defined by $O \Big( P* \Big[ G(\frac{R(R-1)}{2})(\psi_1) + \frac{(M_{init})(M_{init}-1)}{2}(\psi_2) + (N+1)M \Big] \Big)$ where $\psi_1$ and $\psi_2$ are the number of iterations it takes to reduce $R$ to $\lceil \frac{M_{init}}{G} \rceil$ and $M_{init}$ to $M$, respectively. As such, $2 \leq \psi_1 \leq R - \frac{M_{init}}{G} + 1$ and $2 \leq \psi_2 \leq M_{init} - M + 1$. Where $\psi_1$ and $\psi_2$ fall within their respective ranges depends on the choice of $C$. Specifically:
$$\lim_{C \rightarrow 0} \psi_1 = R - \frac{M_{init}}{G} + 1, \hspace{0.1in} \lim_{C \rightarrow 0} \psi_2 =  M_{init} - M + 1, \hspace{0.1in} \lim_{C \rightarrow 1} \psi_1 = \lim_{C \rightarrow 1} \psi_2 =  2$$
\end{description}

\newpage

\bibliographystyle{asa}

\bibliography{DataNuggetsBibliography}

\end{document}